\newtheorem{theorem}{Theorem}[section]
\newtheorem{corollary}[theorem]{Corollary}
\newtheorem{proposition}[theorem]{Proposition}
\newtheorem{example}[theorem]{Example}
\theoremstyle{definition}
\newtheorem{definition}[theorem]{Definition}
\theoremstyle{remark}
\numberwithin{equation}{section}
\begin{document}
\title{Relative two-weight $\mathbb{Z}_{2}\mathbb{Z}_{4}$-additive Codes
\thanks{The first author would like to thank the Department of Science and Technology (DST), New Delhi, India for their financial support
in the form of INSPIRE Fellowship
(DST Award Letter No.IF130493/DST/INSPIRE Fellowship / 2013/(362) Dated: 26.07.2013) 
to carry out this work.}
}
\author{N. Annamalai\\
Research Scholar\\
Department of Mathematics\\
Bharathidasan University\\
Tiruchirappalli-620 024, Tamil Nadu, India\\
{Email: algebra.annamalai@gmail.com}
\bigskip\\
C.~Durairajan\\
Assistant Professor\\
Department of Mathematics\\ 
School of Mathematical Sciences\\
Bharathidasan University\\
Tiruchirappalli-620024, Tamil Nadu, India\\
{Email: cdurai66@rediffmail.com}
\hfill \\
\hfill \\
\hfill \\
\hfill \\
{\bf Proposed running head:} Relative two-weight $\mathbb{Z}_{2}\mathbb{Z}_{4}$-additive Codes}
\date{}
\maketitle

\newpage

\vspace*{0.5cm}
\begin{abstract}In this paper, we study a relative two-weight $\mathbb{Z}_{2}\mathbb{Z}_{4}$-additive codes. It is shown that the Gray image
of a two-distance $\mathbb{Z}_{2}\mathbb{Z}_{4}$-additive code is a binary two-distance code and that the Gray image of a relative two-weight
$\mathbb{Z}_{2}\mathbb{Z}_{4}$-additive code, with nontrivial binary part, is a linear binary relative two-weight code. 
The structure of relative two-weight $\mathbb{Z}_{2}\mathbb{Z}_{4}$-additive codes are described. 
Finally, we discussed permutation automorphism group of a  $\mathbb{Z}_{2}\mathbb{Z}_{4}$-additive codes.
\end{abstract}
\vspace*{0.5cm}
{{\it Keywords:} Relative two-weight additive codes, Dual codes, Gray map, Permutation Automorphism group.}\\        
%
{\it 2000 Mathematical Subject Classification:} Primary: 94B25, Secondary: 11H31
\vspace{0.5cm}
\vspace{1.5cm}

\noindent
Corresponding author:\\ 
\\
 \hspace*{1cm} 
Dr. C. Durairajan\\
\hspace*{1cm}
Assistant Professor\\
\hspace*{1cm}
Department of Mathematics \\
\hspace*{1cm}
Bharathidasan University\\
\hspace*{1cm}
Tiruchirappalli-620024, Tamil Nadu, India\\
\hspace*{1cm}
E-mail: cdurai66@rediffmail.com
\newpage

\newpage




\section{Introduction}

\quad A q-ary code of length $n$ over a finite field $\mathbb{F}_{q}$ of size $q$ is a subset 
of $\mathbb{F}_{q}^n.$ If $q = 2,$ then the code is called a binary code. If it is a subspace 
of $\mathbb{F}_{q}^{n},$ then it is called a $q$-ary linear code. 

In \cite{bon}, Bonisoli has determined the structure of linear one-weight codes over 
finite fields and proved that every equidistant linear code is a sequence of Simplex codes. In \cite{car}, Carlet studied
linear one-weight codes over $\mathbb{Z}_{4}$ where the weight used  is the Lee weight. In 2000,
Wood \cite{wood} determined the structure of linear one-weight codes over $\mathbb{Z}_{m}$ for various weights.

In \cite{del}, Delsarte gave Fundamental results on additive codes. He showed that any abelian binary propelinear code has the form
$\mathbb{Z}_{2}^{\gamma}\times \mathbb{Z}_{4}^{\delta}$ for some nonnegative integers
$\gamma$ and $\delta$ with $\gamma + \delta >0.$ Hence, it become important to study
additive codes in $\mathbb{Z}_{2}^{\alpha}\times \mathbb{Z}_{4}^{\beta}$ for some nonnegative 
integers $\alpha$ and $\beta.$ A $\mathbb{Z}_{2}\mathbb{Z}_{4}$-code is a subset of 
$\mathbb{Z}_{2}^{\alpha}\times\mathbb{Z}_{4}^{\beta}$ where $\alpha, \beta$ are nonnegative integers such that $\alpha+\beta>0.$ 
If it is a subgroup of $\mathbb{Z}_{2}^{\alpha}\times\mathbb{Z}_{4}^{\beta},$ then it is 
called an {\it additive code.}
Fundamental results on $\mathbb{Z}_{2}\mathbb{Z}_{4}$-additive codes, including the generator matrix, the existence and constructions
of self-dual codes and several different bounds can be found in \cite{sole} - \cite{df}.
The structure of one-weight $\mathbb{Z}_{2}\mathbb{Z}_{4}$-additive codes have been determined by Dougherty et al.\cite{dhl}. Relative one-weight
linear codes were introduced by Liu and Chen over finite fields \cite{liu} and \cite{lich}. Automorphism groups of Grassmann codes were discussed 
by Sudhir R. Ghorpade and Krishna V. Kaipa \cite{ghor}.

In this paper, we introduced a relative two-weight additive codes  over $\mathbb{Z}_{2}\mathbb{Z}_{4}$. We showed that the Gray image of a two-distance
 $\mathbb{Z}_{2}\mathbb{Z}_{4}$-code is a binary two-distance code. We proved that the Gray image of 
 a relative two-weight additive $\mathbb{Z}_{2}\mathbb{Z}_{4}$-code is a binary relative 
 two-weight  linear code. We described the structure of relative 
two-weight $\mathbb{Z}_{2}\mathbb{Z}_{4}$-additive codes. Finally, we discussed permutation 
automorphism group of a $\mathbb{Z}_{2}\mathbb{Z}_{4}$-additive codes.

\section{Preliminaries}

Throughtout in this paper we denote the calligraphic $\mathcal{C}$ as a code in 
$\mathbb{Z}_{2}^{\alpha}\times\mathbb{Z}_{4}^{\beta}$ and the 
standard $C$ as a code over the binary field $\mathbb{Z}_{2}$.

The Gray map from $\mathbb{Z}_{4}$ to $\mathbb{Z}_{2}^{2}$ is given by
\begin{equation*}
 \phi(0)=(0,0), \phi(1)=(0,1),\phi(2)=(1,1) \text{ and }\phi(3)=(1,0).
\end{equation*}
This map was defind by F. Gray in 1940 to avoid  large errors in transmiting signal by pulse code modulation.
Since the Lee weight of 0 is 0, 1 is 1, 2 is 2 and 3 is 1, implies $wt_L(i) = wt_H(\phi(i))$ for all 
$i \in \mathbb{Z}_{4}$ where $wt_L(x)$ is the Lee weight of $x$ and  $wt_H(x)$ is the Hamming 
weight of $x.$ In \cite{vera}, they gave the following theorem
\quad\begin{theorem} The Gray map $\phi$ is a distance preserving map from $\mathbb{Z}_{4}^{n}$ with Lee distance to $\mathbb{Z}_{2}^{2n}$ with Hamming distance.
\end{theorem}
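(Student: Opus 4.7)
The plan is to reduce the distance-preserving property to a single-coordinate weight identity, which then follows from a small case check. First I would extend $\phi$ to a map $\mathbb{Z}_4^n \to \mathbb{Z}_2^{2n}$ componentwise, i.e.\ $\phi(x_1,\dots,x_n)=(\phi(x_1),\dots,\phi(x_n))$. Since both the Lee distance on $\mathbb{Z}_4^n$ and the Hamming distance on $\mathbb{Z}_2^{2n}$ split as sums over coordinates, namely
\[
d_L(x,y)=\sum_{i=1}^{n} wt_L(x_i-y_i),\qquad d_H(\phi(x),\phi(y))=\sum_{i=1}^{n} wt_H(\phi(x_i)\oplus\phi(y_i)),
\]
it is enough to prove the single-coordinate identity $wt_L(a-b)=wt_H(\phi(a)\oplus\phi(b))$ for every pair $a,b\in\mathbb{Z}_4$.

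For this identity I would first show that the right-hand side depends only on the difference $c=a-b\pmod 4$. This can be done by enumerating, for each $c\in\{0,1,2,3\}$, the four ordered pairs $(a,a-c)$ in $\mathbb{Z}_4$ and checking that in all cases $\phi(a)\oplus\phi(a-c)$ yields a vector in $\mathbb{Z}_2^2$ of Hamming weight $0,1,2,1$ respectively. Once this translation-invariance is established, specialising to $b=0$ gives $wt_H(\phi(c)\oplus\phi(0))=wt_H(\phi(c))$, and this equals $wt_L(c)$ by the per-element identity $wt_L(i)=wt_H(\phi(i))$ noted just before the theorem.

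The only conceptual obstacle is that $\phi$ is \emph{not} a group homomorphism from $(\mathbb{Z}_4,+)$ to $(\mathbb{Z}_2^2,\oplus)$, so one cannot simply write $\phi(a)\oplus\phi(b)=\phi(a-b)$ and invoke weight preservation directly. The case-by-case verification above is precisely the step that bridges this gap: although the vectors $\phi(a)\oplus\phi(b)$ and $\phi(a-b)$ can genuinely differ, their Hamming weights always coincide, which is exactly what distance preservation requires. Summing the per-coordinate identities over $i=1,\dots,n$ then yields $d_L(x,y)=d_H(\phi(x),\phi(y))$ and completes the proof.
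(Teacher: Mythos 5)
Your proof is correct and complete. Note that the paper itself offers no proof of this statement at all --- it simply quotes the result from Huffman and Pless \cite{vera} --- so there is nothing to compare against; your argument (reduce to the single-coordinate identity $wt_L(a-b)=wt_H(\phi(a)\oplus\phi(b))$, then verify it by the $4\times 4$ case check, which is exactly what is needed since $\phi$ is not a homomorphism) is the standard proof and supplies precisely the step the paper leaves implicit.
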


Let $(u|v)\in \mathbb{Z}_{2}^{\alpha}\times\mathbb{Z}_{4}^{\beta}$ where 
$u \in \mathbb{Z}_{2}^{\alpha}$ and 
$ v \in \mathbb{Z}_{4}^{\beta},$ then in \cite{df1}, the author defined the Gray map 
$\Phi : \mathbb{Z}_{2}^{\alpha}\times\mathbb{Z}_{4}^{\beta}\rightarrow \mathbb{Z}_{2}^{\alpha+2\beta}$ as
$$\Phi((u|v))= (u|\phi(v))\text{ for all } (u|v)\in \mathbb{Z}_{2}^{\alpha}\times\mathbb{Z}_{4}^{\beta}$$
where $\phi(v) = (\phi(v_1),\phi(v_2), \cdots, \phi(v_\beta)) \in  \mathbb{Z}_{4}^{\beta}.$

Let $(u|v),(u'|v')\in \mathbb{Z}_{2}^{\alpha}\times\mathbb{Z}_{4}^{\beta},$ then the Hamming distance 
between $(u|v)$ and $(u'|v')$ is defined by 
$$d_{H}((u|v),(u'|v'))=wt_{H}(u-u'|v-v').$$

If $\mathcal{C}$ is a $\mathbb{Z}_{2}\mathbb{Z}_{4}$-additive code, then 
$(u|v)-(u'|v') \in \mathcal{C}$ for all $(u|v), (u'|v') \in \mathcal{C}$ and hence 
$wt_{H}((u|v)-(u'|v'))= wt_{H}(u-u'|v-v') = d((u|v), (u'|v')).$ Therefore, the minimum Hamming distance and the minimum Hamming weight of
$\mathbb{Z}_{2}\mathbb{Z}_{4}$-additive code $\mathcal{C}$ are the same. 
The Lee weight of $(u|v)$ is defined by
$$wt_L((u|v))=wt_{H}(u)+wt_{H}(\phi(v))$$ 
and the Lee distance between $(u|v)$ and $(u'|v')$ as
$$d_L((u|v),(u'|v'))=wt_{H}(u-u')+wt_{H}(\phi(v-v')).$$

From the Theorem 2.1, it is easy to see that the Gray map 
$\Phi: \mathbb{Z}_{2}^{\alpha}\times\mathbb{Z}_{4}^{\beta}\rightarrow \mathbb{Z}_{2}^{n}$ is an isometry from 
$\mathbb{Z}_{2}^{\alpha}\times\mathbb{Z}_{4}^{\beta} $ with Lee distance to $\mathbb{Z}_{2}^{n}$ 
with Hamming distance where $n=\alpha+2\beta.$

Throughtout in this correspondance, we denote the minimum Hamming distance of the code 
$\mathcal{C}$ by $d_{H}(\mathcal{C}),$ the minimum Lee distance of the code $\mathcal{C}$ by $d(\mathcal{C})$ and  the minimum 
Lee weight of code $\mathcal{C}$ by $wt(\mathcal{C}).$  We write Lee weight as weight.
\section{Relative two-weight Codes in  $\mathbb{Z}_{2}^{\alpha}\times\mathbb{Z}_{4}^{\beta}$}

 A nonzero code $\mathcal{C}$ in $ \mathbb{Z}_{2}^{\alpha}\times\mathbb{Z}_{4}^{\beta}$ 
 is called a $\it{one\text{-}weight\, code}$ if  all its nonzero
 codewords  have the same Lee weight.

 A nonzero  code $\mathcal{C}$ in $ \mathbb{Z}_{2}^{\alpha}\times\mathbb{Z}_{4}^{\beta}$ is called a $\it{two\text{-}weight \,code}$ if all its nonzero
 codewords have two different Lee weights.

\begin{definition}
 A nonzero additive code $\mathcal{C}$ in $ \mathbb{Z}_{2}^{\alpha}\times\mathbb{Z}_{4}^{\beta}$ is called a 
 {\it relative two-weight additive code } to the subcode $\mathcal{C}_{1}$ of $\mathcal{C}$
 if $\mathcal{C}_{1}$ and $\mathcal{C}\setminus \mathcal{C}_{1}$ are one-weight codes.
 A relative two-weight additive code $\mathcal{C}$ to the subcode $\mathcal{C}_{1}$
 with $wt(c_1)=m_{1}$ for some $m_{1}>0$ for all $ c_1 \neq 0 \text{ in } \mathcal{C}_{1}$ and $wt(c)=m$ for all 
 $c\in \mathcal{C}\setminus \mathcal{C}_{1}$ for some $m>0,$ is denoted as $\mathcal{C}(m_{1},m).$
\end{definition}
\begin{definition}
 A nonzero additive code $\mathcal{C}$ in $ \mathbb{Z}_{2}^{\alpha}\times\mathbb{Z}_{4}^{\beta}$ is called a 
 {\it two-distance code } if there exists a pair of
 distinct positive integers $m,m_{1}$ such that for any two distinct codewords $c, d\in\mathcal{C},$ $d(c,d)\in \{m,m_{1}\}.$
\end{definition}

It is easy to see that if $\mathcal{C}$ is an additive code, then $\mathcal{C}$ is a  relative two-weight code if and only if $\mathcal{C}$ is a 
two-distance code.
\begin{theorem}
 If $\mathcal{C}$ is a two-distance additive code in $ \mathbb{Z}_{2}^{\alpha}\times\mathbb{Z}_{4}^{\beta}$ with distance $m$  and $m_{1}$, then
 $\Phi(\mathcal{C})$ is a binary two-distance code with the same distance $m$ and $m_{1}.$
\end{theorem}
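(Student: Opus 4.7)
The plan is to invoke the isometry property of the extended Gray map $\Phi:\mathbb{Z}_{2}^{\alpha}\times\mathbb{Z}_{4}^{\beta}\to \mathbb{Z}_{2}^{n}$ (with $n=\alpha+2\beta$) that was recorded just before the statement. Since $\Phi$ sends Lee distance to Hamming distance and is a bijection, the two-distance property transports across $\Phi$ essentially without work; the proof is really just a bookkeeping argument, and I do not expect any genuine obstacle.

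First I would take two arbitrary distinct codewords $\Phi(c),\Phi(c')$ of $\Phi(\mathcal{C})$. Injectivity of $\Phi$ gives $c\neq c'$ in $\mathcal{C}$, and the hypothesis that $\mathcal{C}$ is two-distance forces $d_{L}(c,c')\in\{m,m_{1}\}$. Applying the isometry,
\[
d_{H}\bigl(\Phi(c),\Phi(c')\bigr)=d_{L}(c,c')\in\{m,m_{1}\},
\]
so every pair of distinct codewords in $\Phi(\mathcal{C})$ has Hamming distance in $\{m,m_{1}\}$.

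Second, I would check that both values $m$ and $m_{1}$ are actually attained in $\Phi(\mathcal{C})$, so that the image is genuinely a two-distance code and not just a one-distance one. By the definition of a two-distance code, there exist pairs $(a,a'),(b,b')$ of distinct codewords in $\mathcal{C}$ with $d_{L}(a,a')=m$ and $d_{L}(b,b')=m_{1}$. Applying $\Phi$ and using the isometry a second time yields
\[
d_{H}\bigl(\Phi(a),\Phi(a')\bigr)=m,\qquad d_{H}\bigl(\Phi(b),\Phi(b')\bigr)=m_{1},
\]
with the four images pairwise well-defined by injectivity of $\Phi$.

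Combining the two steps, $\Phi(\mathcal{C})$ is a binary code in which every pair of distinct codewords is at Hamming distance exactly $m$ or $m_{1}$, and both values occur; that is precisely the statement. The only point worth flagging is that the argument does not use the additive structure of $\mathcal{C}$ at all beyond what is already folded into the isometry, so the same proof works for arbitrary (not necessarily additive) two-distance subsets — an observation I would make in a short remark but not pursue further here.
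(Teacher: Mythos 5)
Your proof is correct and follows essentially the same route as the paper: both arguments simply transport the two-distance property through the isometry $d_{H}(\Phi(c),\Phi(c'))=d_{L}(c,c')$, using injectivity of $\Phi$ to get $c\neq c'$. Your second step (checking that both values $m$ and $m_{1}$ are actually attained in the image) is a small extra care the paper omits; it is harmless and, given the paper's definition of a two-distance code, not strictly required.
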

\begin{proof}
 Let $\Phi(c),\Phi(d)\in \Phi(\mathcal{C})$ with $\Phi(c)\neq \Phi(d)$ where $c,d\in \mathcal{C}.$
 Clearly $c\neq d.$ Since $\mathcal{C}$ is a  $\mathcal{C}(m_{1},m)$ code, $d(c,d)\in\{m,m_{1}\}.$ 
 Since the Gray map $\Phi$ is an isometry, we have  $$d_{H}(\Phi(c),\Phi(d))=d(c,d)\in\{m,m_{1}\}.$$
 Hence, $\Phi(\mathcal{C})$ is a binary two-distance code with the same distance $m$ and $m_{1}.$
\end{proof}

Let $(u_1|v_1),(u_2|v_2)\in \mathbb{Z}_{2}^{\alpha}\times\mathbb{Z}_{4}^{\beta}$ 
where $u_1,u_2\in \mathbb{Z}_{2}^{\alpha}$ and
$v_1,v_2\in \mathbb{Z}_{4}^{\beta}.$ Then the inner product between $(u_1|v_1)$ and 
$(u_2|v_2)$ is defined by
$$\langle (u_1|v_1), (u_2|v_2)\rangle=2\langle u_1,u_2\rangle+\langle v_1,v_2\rangle\in \mathbb{Z}_{4},$$
where $\langle x,y\rangle=\sum\limits_{i=1}^{n}x_{i}y_{i}$ and the computation of
$2\langle u_1,u_2\rangle$ are done in $\mathbb{Z}_{4}.$

Let $\mathcal{C}\subseteq\mathbb{Z}_{2}^{\alpha}\times\mathbb{Z}_{4}^{\beta}$ be an
additive code. We define the dual 
code $\mathcal{C}^{\perp}$ of $\mathcal{C}$ as
$$\mathcal{C}^{\perp}=\{(x|y)\in \mathbb{Z}_{2}^{\alpha}\times\mathbb{Z}_{4}^{\beta} \mid \langle (x|y), (u|v)\rangle=0 \, \text{for all}\, (u|v) \in \mathcal{C}\}.$$


 The following examples show that if $\mathcal{C}$ is a relative two-weight additive code, 
 then ${\mathcal{C}^\perp}$ need not be so.

\begin{example}
 Let $\mathcal{C}=\langle (0|1)\rangle$ be an additive code in $\mathbb{Z}_{2}^{1}\times\mathbb{Z}_{4}^{1}.$
 Then $\mathcal{C}$ is a $\mathcal{C}(2,1)$  relative two-weight additive code to the subcode  $\mathcal{C}_{1}$ where 
 $\mathcal{C}_{1}= \langle(0|2)\rangle$ and its dual code $\mathcal{C}^{\perp}=\langle(1|0)\rangle$ is not a relative two-weight additive code.
\end{example}

 \begin{example}
 Let $\mathcal{C}$ be an additive code in $\mathbb{Z}_{2}^{2}\times\mathbb{Z}_{4}^{2}$ with generator matrix
 \begin{equation*}
G=
 \begin{bmatrix}
  1&0&\vline&1&1\\
  1&1&\vline&3&1
 \end{bmatrix}.
 \end{equation*}
   Then $\mathcal{C}$ is a relative two-weight additive code $\mathcal{C}(4,3)$ to the subcode $\mathcal{C}_{1}=\langle (11|31)\rangle$ and its
  dual code $\mathcal{C}^{\perp}=\langle (10|02),(00|22)\rangle$ which is a relative two-weight additive
  code with same weights $4$ and $3.$ Therefore,  both $\mathcal{C}$ and $\mathcal{C}^{\perp}$ 
  are  relative two-weight additive codes.
 \end{example}

 Let $\mathcal{C}$ be a relative two-weight additive  code $\mathcal{C}(m_{1},m)$ with respect to a 
 subcode $\mathcal{C}_{1}.$ Let $c=(u|v)\in \mathcal{C}_{1}.$ Since $\mathcal{C}_{1}$ is an additive one-weight subcode, 
 $c+c=(0|2v)\in \mathcal{C}_{1}.$ Since each coordinate of $2v$ in $\mathbb{Z}_4^\beta$ is either 0 or 2,
 implies $wt(2v)$ is even and hence $wt(0|2v)$ is even. Since $\mathcal{C}_{1}$ is an additive one-weight $m_1$
 code, $wt(c) = wt(2c)$ and $wt(c)$ is even for all nonzero codewords in $\mathcal{C}_{1}.$
Therefore, $m_{1}$ is even. Thus, we have

\begin{theorem}\label{a}
 Let $\mathcal{C}$ be an additive code in $\mathbb{Z}_{2}^{\alpha}\times\mathbb{Z}_{4}^{\beta}.$ If $\mathcal{C}(m_{1},m)$ is a relative two-weight
 additive code, then $m_{1}$ must be even.
\end{theorem}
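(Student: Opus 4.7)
The plan is to derive the parity of $m_1$ from a mod-$2$ additivity of the Lee weight combined with the fact that $\mathcal{C}_1$ sits as a proper subgroup of $\mathcal{C}$.

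First, I would set up the parity identity
\[wt(x+y) \equiv wt(x) + wt(y) \pmod 2 \qquad \text{for all } x, y \in \mathbb{Z}_2^\alpha \times \mathbb{Z}_4^\beta.\]
Coordinatewise, on the binary factor this is the classical $wt_H(a \oplus b) \equiv wt_H(a) + wt_H(b) \pmod 2$; on a single $\mathbb{Z}_4$-coordinate it follows from a four-element check showing $wt_L(a) \equiv a \pmod 2$ on $\mathbb{Z}_4$, so that $wt_L(a+b) \equiv a+b \equiv wt_L(a) + wt_L(b) \pmod 2$. Equivalently, since Theorem~2.1 says $\Phi$ is an isometry into a binary space, the identity reduces to the standard mod-$2$ additivity of Hamming weight on binary vectors.

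Second, because $\mathcal{C}_1 \subsetneq \mathcal{C}$ (otherwise $\mathcal{C} \setminus \mathcal{C}_1$ is empty and the relative two-weight definition is vacuous), I pick some $d \in \mathcal{C} \setminus \mathcal{C}_1$ and any nonzero $c \in \mathcal{C}_1$. Since $\mathcal{C}_1$ is a subgroup, $d + c \notin \mathcal{C}_1$ either (otherwise $d = (d+c) - c \in \mathcal{C}_1$), so $d+c \in \mathcal{C} \setminus \mathcal{C}_1$. The relative two-weight hypothesis then gives $wt(d) = wt(d+c) = m$ while $wt(c) = m_1$. Applying the parity identity to $d + c$,
\[m \equiv wt(d+c) \equiv wt(d) + wt(c) \equiv m + m_1 \pmod 2,\]
and therefore $m_1$ is even.

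I expect the parity step to be entirely routine; the only mild subtlety is that the natural first attempt, which uses $2c = (0 \mid 2v) \in \mathcal{C}_1$ and observes that $2c$ has even Lee weight since its $\mathbb{Z}_4$-coordinates lie in $\{0,2\}$, fails in the degenerate case where every element of $\mathcal{C}_1$ has order dividing $2$: there $2c = 0$ for all $c$ and the $2c$ argument gives no information. The coset-mixing step above is what sidesteps this degeneracy uniformly, and I anticipate it to be the cleanest way to present the proof.
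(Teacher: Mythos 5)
Your proof is correct, and it takes a genuinely different route from the paper's. The paper argues entirely inside $\mathcal{C}_1$: for nonzero $c=(u|v)\in\mathcal{C}_1$ it forms $2c=(0|2v)$, observes that every coordinate of $2v$ lies in $\{0,2\}$ so $wt(2c)$ is even, and then invokes the one-weight property of $\mathcal{C}_1$ to conclude $wt(c)=wt(2c)$. That last step silently assumes $2c\neq 0$, and it breaks down exactly in the degenerate case you flagged, where every element of $\mathcal{C}_1$ has order dividing $2$ (i.e. $\mathcal{C}_1\subseteq\mathbb{Z}_2^{\alpha}\times\{0,2\}^{\beta}$); indeed $\langle(1|2)\rangle$ is a one-weight additive code of odd weight $3$, so no argument using only $\mathcal{C}_1$ can ever force $m_1$ even. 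Your proof instead uses the mod-$2$ additivity of the Lee weight (equivalently, that $x\mapsto wt(x)\bmod 2$ is a homomorphism to $\mathbb{Z}_2$, which your coordinatewise check $wt_L(a)\equiv a\pmod 2$ on $\mathbb{Z}_4$ establishes) together with the coset-mixing observation that $d+c\in\mathcal{C}\setminus\mathcal{C}_1$ whenever $d\in\mathcal{C}\setminus\mathcal{C}_1$ and $c\in\mathcal{C}_1$, so $m\equiv m+m_1\pmod 2$. This genuinely uses the one-weight property of $\mathcal{C}\setminus\mathcal{C}_1$, which is what the statement actually requires, and it covers the degenerate case uniformly. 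In short: the paper's doubling argument is shorter when it applies, but your argument is the one that proves the theorem in full generality, and it closes a real gap in the published proof.
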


\begin{theorem}
 Let $\mathcal{C}$ be a $\mathcal{C}(m_{1},m)$ code  in 
 $\mathbb{Z}_{2}^{\alpha}\times\mathbb{Z}_{4}^{\beta},$
 then the  Gray image $\Phi(\mathcal{C})$ of $\mathcal{C}$  is a binary relative two-weight code 
 $\Phi(\mathcal{C})(m_1,m)$in 
 $\mathbb{Z}_{2}^{\alpha+2\beta}.$
\end{theorem}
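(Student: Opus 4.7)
The plan is to transport the relative two-weight structure of $\mathcal{C}$ across the Gray map using the isometry property already established in the preliminaries. Concretely, let $\mathcal{C}_1 \subseteq \mathcal{C}$ be the witnessing subcode, so that every nonzero $c_1 \in \mathcal{C}_1$ has Lee weight $m_1$ and every $c \in \mathcal{C}\setminus\mathcal{C}_1$ has Lee weight $m$. The natural candidate for the witnessing subcode of $\Phi(\mathcal{C})$ is $\Phi(\mathcal{C}_1)$, and I would verify the two required one-weight conditions for $\Phi(\mathcal{C}_1)$ and $\Phi(\mathcal{C})\setminus\Phi(\mathcal{C}_1)$ separately.

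The key computation is the identity $wt_L(c) = wt_H(\Phi(c))$ for every $c \in \mathbb{Z}_2^\alpha \times \mathbb{Z}_4^\beta$, which is immediate from the definitions $wt_L((u|v)) = wt_H(u) + wt_H(\phi(v))$ and $\Phi((u|v)) = (u\,|\,\phi(v))$ together with the coordinate-level fact $wt_L(i) = wt_H(\phi(i))$ noted before Theorem~2.1. Applying this to any nonzero $\Phi(c_1) \in \Phi(\mathcal{C}_1)$ forces $wt_H(\Phi(c_1)) = m_1$, and applying it to any $\Phi(c) \in \Phi(\mathcal{C}\setminus\mathcal{C}_1)$ forces $wt_H(\Phi(c)) = m$. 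Combined, these show $\Phi(\mathcal{C})$ has exactly the two Hamming weights $m_1$ and $m$, partitioned in the prescribed way.

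The only subtle point, and where I would be careful, is the set-theoretic identity $\Phi(\mathcal{C}) \setminus \Phi(\mathcal{C}_1) = \Phi(\mathcal{C} \setminus \mathcal{C}_1)$, without which the second weight class is not correctly identified. This follows from injectivity of $\Phi$, since the coordinate Gray map $\phi : \mathbb{Z}_4 \to \mathbb{Z}_2^2$ is a bijection and the identity on $\mathbb{Z}_2^\alpha$ is trivially injective. I would not expect any real obstacle here: the whole argument is a direct transport of structure along an isometric bijection, and no appeal to linearity of $\Phi(\mathcal{C})$ is required for the stated conclusion, only that $\Phi(\mathcal{C})$ admits a subset playing the role of $\mathcal{C}_1$ in the relative two-weight definition.
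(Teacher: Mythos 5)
Your proposal is correct and follows essentially the same route as the paper: take $\Phi(\mathcal{C}_1)$ as the witnessing subcode, use $wt_H(\Phi(c))=wt_L(c)$ to transfer the two weight classes, and identify $\Phi(\mathcal{C})\setminus\Phi(\mathcal{C}_1)$ with $\Phi(\mathcal{C}\setminus\mathcal{C}_1)$. The only difference is that you make the injectivity of $\Phi$ explicit where the paper leaves it implicit, which is a minor improvement rather than a different argument.
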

\begin{proof}
 Let $ x \in \Phi(\mathcal{C})\setminus \Phi(\mathcal{C}_{1}),$ then there exists $c\in \mathcal{C}\setminus\mathcal{C}_{1}$ such that $x=\Phi(c).$
 Since $wt_{H}(\Phi(c))=wt(c)$ for all $c\in \mathcal{C}.$ Therefore, $wt(x)=wt(\Phi(c))=wt(c)=m.$ 
 Let $x\in \Phi(\mathcal{C}_{1})$ with $x\neq 0,$ then there exists  $0\neq c_{1}\in \mathcal{C}_{1}$ such that $x=\Phi(c_{1}).$ Therefore, $wt(x)=wt(\Phi(c_{1}))=wt(c_{1})=m_{1}.$
 Hence $\Phi(\mathcal{C})(m_{1},m)$ is a binary relative two-weight code in  $\mathbb{Z}_{2}^{\alpha+2\beta}.$
\end{proof}

\begin{theorem}
 Let $\mathcal{C}$ be a relative two-weight additive code $\mathcal{C}(m_{1},m)$ in
 $\mathbb{Z}_{2}^{\alpha}\times\mathbb{Z}_{4}^{\beta}.$ Then for any positive integer $t,$ there exists a relative two-weight additive code
 $\mathcal{D}(tm_{1},tm)$ in $\mathbb{Z}_{2}^{t\alpha}\times\mathbb{Z}_{4}^{t\beta}.$
\end{theorem}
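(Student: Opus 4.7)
The plan is to exhibit $\mathcal{D}$ explicitly by means of a $t$-fold repetition construction and then verify that it satisfies all the required properties. Specifically, define the map
\[
\psi:\mathbb{Z}_{2}^{\alpha}\times\mathbb{Z}_{4}^{\beta}\longrightarrow\mathbb{Z}_{2}^{t\alpha}\times\mathbb{Z}_{4}^{t\beta},\qquad \psi(u\mid v)=(\underbrace{u,u,\ldots,u}_{t\text{ times}}\mid \underbrace{v,v,\ldots,v}_{t\text{ times}}),
\]
and set $\mathcal{D}=\psi(\mathcal{C})$ with the distinguished subcode $\mathcal{D}_{1}=\psi(\mathcal{C}_{1})$.

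The first step is to observe that $\psi$ is an injective group homomorphism from $\mathbb{Z}_{2}^{\alpha}\times\mathbb{Z}_{4}^{\beta}$ into $\mathbb{Z}_{2}^{t\alpha}\times\mathbb{Z}_{4}^{t\beta}$, so $\mathcal{D}$ is an additive code and $\mathcal{D}_{1}$ is a subgroup of $\mathcal{D}$ which is bijectively identified with $\mathcal{C}_{1}$. In particular $\mathcal{D}\setminus\mathcal{D}_{1}=\psi(\mathcal{C}\setminus\mathcal{C}_{1})$.

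The second step is the weight computation. Because the Lee weight of a concatenation is the sum of Lee weights of the blocks, every codeword $c\in\mathcal{C}$ satisfies $wt(\psi(c))=t\cdot wt(c)$. Applying this to a nonzero $c_{1}\in\mathcal{C}_{1}$ gives $wt(\psi(c_{1}))=tm_{1}$, and applying it to $c\in\mathcal{C}\setminus\mathcal{C}_{1}$ gives $wt(\psi(c))=tm$. Consequently $\mathcal{D}_{1}$ is a one-weight additive code of weight $tm_{1}$ and every codeword of $\mathcal{D}\setminus\mathcal{D}_{1}$ has weight $tm$, so $\mathcal{D}=\mathcal{D}(tm_{1},tm)$ is a relative two-weight additive code in $\mathbb{Z}_{2}^{t\alpha}\times\mathbb{Z}_{4}^{t\beta}$ with respect to $\mathcal{D}_{1}$.

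There is no real obstacle here; the only point that needs a touch of care is confirming that the coset partition is preserved, i.e.\ that $\psi(\mathcal{C}\setminus\mathcal{C}_{1})\cap\mathcal{D}_{1}=\emptyset$, which follows immediately from the injectivity of $\psi$. With that, the additivity of $\psi$ and the blockwise additivity of the Lee weight furnish everything required, so the argument is essentially a direct verification.
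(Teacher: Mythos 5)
Your proposal is correct and uses exactly the same $t$-fold repetition construction as the paper, with the same weight computation $wt(\psi(c))=t\cdot wt(c)$. The only difference is that you explicitly justify the coset preservation $\psi(\mathcal{C}\setminus\mathcal{C}_{1})\cap\mathcal{D}_{1}=\emptyset$ via injectivity, a point the paper leaves implicit.
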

\begin{proof}
Let $\mathcal{C}$ be a relative two-weight additive code $\mathcal{C}(m_{1},m)$ to the subcode $\mathcal{C}_{1}.$
 Define $\mathcal{D}=\left\{(\underbrace{x\cdots x}_{\text{t times}}|\underbrace{y\cdots y}_\text{t times}) \mid (x|y)\in \mathcal{C}\right\}
 \subseteq\mathbb{Z}_{2}^{t\alpha}\times\mathbb{Z}_{4}^{t\beta}$
and 
$\mathcal{D}_{1}=\left\{(\underbrace{x\cdots x}_\text{t times}|\underbrace{y\cdots y}_\text{t times})\in\mathcal{D} \mid (x|y)\in \mathcal{C}_{1}\right\}.$

 Clearly, $\mathcal{D}_{1}\subseteq \mathcal{D}$ is an additive code in $\mathbb{Z}_{2}^{t\alpha}\times\mathbb{Z}_{4}^{t\beta}.$ 
Let $(\underbrace{x\cdots x}_{t \,\,\text{times}}|\underbrace{y\cdots y}_{t\,\,\text{times}})\in \mathcal{D}\setminus \mathcal{D}_{1}.$
 Then $(x|y)\in \mathcal{C}\setminus\mathcal{C}_{1},$  $wt(x|y) = m$ and hence $wt(u|v) = tm.$ 
  
  Let 
  $(\underbrace{x\cdots x}_\text{t times}|\underbrace{y\cdots y}_\text{t times})\in \mathcal{D}_{1}.$  
 Then $(x|y)\in \mathcal{C}_{1}.$ Since $\mathcal{C}_{1}$ is a one-weight code,  $wt(x|y) = m_1$ and hence $wt(u|v) = tm_1.$ 
  Therefore, $\mathcal{D}(tm_{1},tm)$ is a relative
 two-weight additive code to the subcode $\mathcal{D}_{1}.$
 \end{proof}
 
 \begin{theorem}
 Let $\mathcal{C}$ be an additive code in $\mathbb{Z}_{2}^{\alpha}\times\mathbb{Z}_{4}^{\beta}.$
 Then the weights of all codewords of $\mathcal{C}$
 are even if and only if $(\textbf{1}_{\alpha}|\textbf{2}_{\beta})\in \mathcal{C}^{\perp}$ where 
 $\textbf{1}_{\alpha}= (1,1, \cdots, 1) \in \mathbb{Z}_{2}^{\alpha} \text{ and } \textbf{2}_{\beta} = (2,2, \cdots, 2)\in 
 \mathbb{Z}_{4}^{\beta}.$
\end{theorem}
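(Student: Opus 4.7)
The plan is to reduce the theorem to a single arithmetic identity in $\mathbb{Z}_4$:
\[
  \langle (u|v),\,(\mathbf{1}_\alpha|\mathbf{2}_\beta)\rangle \;\equiv\; 2\, wt(u|v) \pmod{4}
  \quad\text{for every } (u|v)\in\mathbb{Z}_2^\alpha\times\mathbb{Z}_4^\beta.
\]
Once this identity is in hand, the conclusion is immediate: an element $2k\in\mathbb{Z}_4$ vanishes exactly when $k$ is even, so the inner product is $0$ on every codeword of $\mathcal{C}$ --- which is precisely the condition $(\mathbf{1}_\alpha|\mathbf{2}_\beta)\in\mathcal{C}^\perp$ --- if and only if every codeword of $\mathcal{C}$ has even Lee weight.

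To prove the identity I would split the inner product using its definition $\langle(u|v),(u'|v')\rangle = 2\langle u,u'\rangle + \langle v,v'\rangle$. The binary contribution $2\langle u,\mathbf{1}_\alpha\rangle$ equals $2\,wt_H(u)$ in $\mathbb{Z}_4$, since $\langle u,\mathbf{1}_\alpha\rangle$ is just the count of nonzero coordinates of $u$.

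For the $\mathbb{Z}_4$-contribution, $\langle v,\mathbf{2}_\beta\rangle = 2\sum_j v_j$, a brief case check shows that $2v_j$ in $\mathbb{Z}_4$ equals $0$ when $v_j\in\{0,2\}$ and $2$ when $v_j\in\{1,3\}$, so modulo $4$ the sum collapses to $2\cdot\#\{j:v_j\in\{1,3\}\}$. The key bridge is the observation that the entries $v_j=2$ contribute $2$ to $wt_L(v)$ but nothing to its parity, so $wt_L(v)\equiv \#\{j:v_j\in\{1,3\}\}\pmod{2}$, giving $\langle v,\mathbf{2}_\beta\rangle\equiv 2\,wt_L(v)\pmod{4}$. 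Summing the two halves yields $2(wt_H(u)+wt_L(v)) = 2\,wt(u|v) \pmod{4}$, which is the desired identity.

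The only place where care is needed is the mod-$4$ reduction in the $\mathbb{Z}_4$-part: coordinates $v_j=2$ are ``invisible'' from the inner product's point of view (since $2\cdot 2=0$ in $\mathbb{Z}_4$) but very much visible in $wt_L(v)$. The short parity argument above is exactly what reconciles this discrepancy and lines up the two sides of the identity; I expect this to be the only non-cosmetic step of the proof.
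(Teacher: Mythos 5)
Your proposal is correct and follows the same route as the paper: compute $\langle(\mathbf{1}_\alpha|\mathbf{2}_\beta),(u|v)\rangle=\sum_i 2u_i+\sum_j 2v_j$ in $\mathbb{Z}_4$ and observe that this vanishes exactly when $wt(u|v)$ is even. The paper dismisses that last equivalence as ``easy to see,'' whereas you supply the one genuinely needed detail --- that coordinates $v_j=2$ contribute $0$ to the inner product but an even amount to $wt_L(v)$, so the parity matches --- which is a welcome but not substantively different elaboration.
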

\begin{proof}
 Let $(u|v)\in \mathcal{C}$ where 
 $u\in \mathbb{Z}_{2}^{\alpha}$ and $v\in \mathbb{Z}_{4}^{\beta}.$ 
 Let us take $u=(u_1,\cdots,u_{\alpha})$ and $v=(v_1,\cdots,v_{\beta}),$ then by the definition of 
 inner product,
  $$\langle (\textbf{1}_{\alpha}|\textbf{2}_{\beta}), (u|v)\rangle=\sum_{i=1}^{\alpha}2u_{i}+\sum_{j=1}^{\beta}2v_{j}.$$
  It is easy to see that $\langle (\textbf{1}_{\alpha}|\textbf{2}_{\beta}), (u|v)\rangle=0$ if and  only if 
  $wt((u|v))$ is even. Therefore, the weight of all codewords in $\mathcal{C}$ are even iff 
  $(\textbf{1}_{\alpha}|\textbf{2}_{\beta})\in \mathcal{C}^{\perp}.$
\end{proof}

Combining this Theorem and Theorem \ref{a}, we have

\begin{corollary}
 Let $\mathcal{C}(m_{1},m)$ be a relative two-weight additive code in $\mathbb{Z}_{2}^{\alpha}\times\mathbb{Z}_{4}^{\beta}.$
 Then the weights of all codewords of $\mathcal{C}$
 are even if and only if $(\textbf{1}_{\alpha}|\textbf{2}_{\beta})\in \mathcal{C}^{\perp}.$
\end{corollary}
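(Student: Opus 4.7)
The plan is to observe that this corollary is essentially a direct specialization of the preceding theorem to the setting of relative two-weight codes. First I would note that a relative two-weight additive code $\mathcal{C}(m_1,m)$ is, in particular, an additive code in $\mathbb{Z}_2^\alpha\times\mathbb{Z}_4^\beta$. Hence the preceding theorem applies verbatim: the weights of all codewords of $\mathcal{C}$ are even if and only if $(\mathbf{1}_\alpha|\mathbf{2}_\beta)\in\mathcal{C}^\perp$. This already gives both directions of the biconditional.

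The role played by Theorem \ref{a} in the combined statement is explanatory rather than logically essential. Since only two nonzero Lee weights occur in $\mathcal{C}$, namely $m_1$ on the nonzero elements of $\mathcal{C}_1$ and $m$ on $\mathcal{C}\setminus\mathcal{C}_1$, the condition ``all codewords of $\mathcal{C}$ have even weight'' is equivalent to the conjunction ``$m_1$ is even and $m$ is even''. By Theorem \ref{a}, the first conjunct holds automatically, so the evenness condition is effectively the single statement ``$m$ is even''. I would record this reduction explicitly, as it makes clear why the corollary is the natural combination of the two results.

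The proof itself should therefore be two short lines: invoke Theorem \ref{a} to dispose of $m_1$, then invoke the preceding theorem (or its short bilinearity argument $\langle(\mathbf{1}_\alpha|\mathbf{2}_\beta),(u|v)\rangle=2\sum u_i+2\sum v_j\equiv 0\pmod 4$ iff $\mathrm{wt}_H(u)+\mathrm{wt}_L(v)$ is even) to translate evenness of $m$ into orthogonality with $(\mathbf{1}_\alpha|\mathbf{2}_\beta)$. There is no real obstacle, since all the mathematical content has already been established; the only thing to take care of is to phrase the reduction cleanly so that the reader sees the two prior results being combined rather than merely one of them being restated.
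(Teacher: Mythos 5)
Your proposal is correct and follows exactly the route the paper takes: the paper gives no separate proof, simply stating that the corollary is obtained by ``combining'' the preceding theorem (which already yields the full biconditional for any additive code) with Theorem \ref{a}, whose role is precisely the explanatory one you identify — reducing the evenness condition to the single requirement that $m$ be even.
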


\section{The Structure of Relative two-weight Additive Codes}

 In a ring, the element $x$ is called {\it unit } if there exists $y$ such that $xy=1.$ The nonzero element $a$ 
 is said to be a {\it zero divisor } if there exists a nonzero element $b$ such that $ab=0.$

\begin{theorem}
 Let $G=
 \begin{pmatrix}
  A\\
  \hline
  G_{1}
 \end{pmatrix}$ be the generator matrix of a relative two-weight additive code $\mathcal{C}(m_{1},m)$ to the subcode $\mathcal{C}_{1}$
 in $\mathbb{Z}_{2}^{\alpha}\times\mathbb{Z}_{4}^{\beta}$ where $G_{1}$ is a generator matrix of the subcode $\mathcal{C}_{1}.$
 If $c=(u|v)$ is a row of $G_{1},$ then the number of units in $v$ is either $0$ or
 $\frac{m_{1}}{2}.$ 
\end{theorem}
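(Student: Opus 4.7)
The plan is to exploit the one-weight property of $\mathcal{C}_{1}$ applied to the codeword $2c$. Since the binary part $u$ lies in $\mathbb{Z}_{2}^{\alpha}$, we automatically have $2u=0$, so $2c=(0\,|\,2v)$, and this element lies again in $\mathcal{C}_{1}$ because $\mathcal{C}_{1}$ is additive. Thus we get a second codeword of known structure to compare against.

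Next I would translate the count of units in $v$ into the Lee weight of $2v$. The key observation is that multiplication by $2$ in $\mathbb{Z}_{4}$ sends $0,2\mapsto 0$ and $1,3\mapsto 2$. Let $a$ denote the number of coordinates of $v$ that are units in $\mathbb{Z}_{4}$ (namely $1$ or $3$). Then $2v$ takes the value $2$ in exactly those $a$ positions and $0$ elsewhere, so
\[
wt(2c)=wt(0\,|\,2v)=2a.
\]

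Now I would split into two cases according to whether $2c$ vanishes. If $2c=0$, then $2v=0$, which forces every coordinate of $v$ to satisfy $2v_{j}=0$, and hence $v_{j}\in\{0,2\}$ for all $j$; so $a=0$. If $2c\neq 0$, then $2c$ is a nonzero element of the one-weight subcode $\mathcal{C}_{1}$, so $wt(2c)=m_{1}$, giving $2a=m_{1}$ and thus $a=m_{1}/2$. The value $m_{1}/2$ is an integer because Theorem \ref{a} guarantees $m_{1}$ is even. Combining the two cases yields the claim.

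I do not foresee a genuine obstacle: the content of the proof is essentially the single observation that ``doubling'' collapses the coordinates where $v_{j}\in\{0,2\}$ to $0$ and converts unit coordinates into $2$'s, turning a combinatorial count of units into a Lee-weight computation to which the one-weight hypothesis can be applied directly.
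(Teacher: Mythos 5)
Your proof is correct and follows essentially the same route as the paper's: both pass to the codeword $2c=(0\,|\,2v)$, split on whether it vanishes, and use the one-weight property of $\mathcal{C}_{1}$ together with the fact that doubling turns each unit coordinate into a $2$ (Lee weight $2$) to get $2a=m_{1}$. Your write-up is, if anything, slightly cleaner in explicitly justifying why $2c$ belongs to $\mathcal{C}_{1}$ and why $a=0$ in the degenerate case.
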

\begin{proof}
 Let $c=(u|v)$ be a row of $G_{1}.$ Then $c+c=(0|2v).$ If $c+c=0,$ then $v$ contains no units. 
 If $c+c\neq 0,$ then $wt(2c)=wt(0|2v)=wt_{H}(0)+wt_{L}(2v).$ Since $wt(c)=m_{1}$ 
 for all $c\in \mathcal{C}_{1},  wt_{L}(2v)=m_{1}.$
 Since the number of unit places in $v$ is the same as number of nonzero coordinates in $2c$
  and nonzero coordinates of $2c$ are 2, the 
  $wt(2c) = 2 \times \text{( number of unit places of $c$)}.$
  Therefore, $2 \times \text{number of units of } v = m_{1}$ and hence 
  the number of unit coordinates in $v$ is $\frac{m_{1}}{2}.$
\end{proof}

Let $(1|1302)\in \mathbb{Z}_{2}^{1}\times \mathbb{Z}_{4}^{4}.$ Then this generates the code
 $$\mathcal{C}=\{(0|0000),(1|1302),(0|2200), (1|3102)\}.$$
 This code is a relative two-weight additive code $\mathcal{C}(4,5)$ and the quaternary part of the vector has $2$ units.
 \begin{theorem}
  Let $\mathcal{C}(m_{1},m)=\langle(u|v)\rangle$ be a relative two-weight additive code in $\mathbb{Z}_{2}^{\alpha}\times\mathbb{Z}_{4}^{\beta}$ to the subcode 
  $\mathcal{C}_{1}$ and $u$ has $l$ unit positions
  $v$ has $k$ unit positions and $s$ zero divisor positions, then $m_{1}=2k$ and $m=l+2s+k.$
 \end{theorem}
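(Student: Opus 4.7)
The plan is to enumerate the (at most four) elements of the cyclic additive code $\mathcal{C}=\langle(u|v)\rangle$, identify the subcode $\mathcal{C}_{1}$ from that list, and then read off $m_{1}$ and $m$ by direct Lee-weight bookkeeping on the coordinates of $v$ and $2v$.

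First I would observe that since $2u=0$ in $\mathbb{Z}_{2}^{\alpha}$, the element $(u|v)$ has order dividing $4$ and
\[
\mathcal{C} \;=\; \{\,0,\;(u|v),\;(0|2v),\;(u|3v)\,\}.
\]
The hypothesis that $\mathcal{C}$ is a relative two-weight code forces the existence of a nonzero proper subcode $\mathcal{C}_{1}$, and in this cyclic group of order at most $4$ the only such candidate is $\langle(0|2v)\rangle$. So I would first argue $\mathcal{C}_{1}=\{0,(0|2v)\}$, which in particular requires $2v\ne 0$, i.e.\ $k\ge 1$. This identification step is the one place one must be a little careful.

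Next I would compute $m_{1}=wt(0|2v)$. Multiplication by $2$ in $\mathbb{Z}_{4}$ sends $0$ and $2$ to $0$ and both units $1,3$ to $2$, so $2v$ has value $2$ in exactly the $k$ unit positions of $v$ and is zero elsewhere; since the Lee weight of $2$ equals $2$, this yields $m_{1}=2k$ immediately. For $m$, I would compute $wt(u|v)=wt_{H}(u)+wt_{L}(v)=l+(k\cdot 1+s\cdot 2)=l+2s+k$, using $wt_{L}(1)=wt_{L}(3)=1$ and $wt_{L}(2)=2$. To finish, I would verify that the remaining coset representative $(u|3v)$ has the same Lee weight, which follows because multiplication by $3$ permutes $\{1,3\}$ and fixes $0$ and $2$ coordinatewise, and so preserves Lee weight on each coordinate. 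Beyond this, everything is essentially bookkeeping; the only conceptual obstacle is ruling out the degenerate case $2v=0$ (equivalently $k=0$), which would collapse $\mathcal{C}$ to a two-element group and leave no room for a nontrivial relative subcode.
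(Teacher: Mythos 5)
Your proposal is correct and follows essentially the same route as the paper: identify $\mathcal{C}_{1}$ as $\langle 2(u|v)\rangle=\{0,(0|2v)\}$ and then compute $m_{1}=2k$ and $m=l+k+2s$ by coordinatewise Lee-weight counting. You are in fact somewhat more careful than the paper, which does not explicitly rule out $2v=0$ nor check that $(u|3v)$ has the same weight as $(u|v)$.
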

\begin{proof}
 Let $ x \neq 0$ in $\mathcal{C}_{1}.$ Given that $\mathcal{C}=\langle (u|v)\rangle$ and $\mathcal{C}_{1}\subset \mathcal{C},$
  implies that $x=2(u|v).$ Otherwise, $\mathcal{C}=\mathcal{C}_{1}.$ 
  Therefore, $wt(x)=wt(2(u|v))=wt((0|2v))=m_{1}.$
  Since $v$ contains $k$ unit positions, $2v$ contains $k$ zero divisor positions. 
  Therefore, $wt((0|2v))=2k.$ and hence $m_{1}=2k.$
  
  Let $(u|v)\in \mathcal{C}\setminus \mathcal{C}_{1},$ then $wt((u|v))=wt_{H}(u)+wt_{H}(\phi(v))=l+wt(\phi(v)).$  
  Since $v$ contains $k$ unit positions and $s$ zero divisor positions, 
 $\phi(v)$ contains $k +2s$ unit positions. 
 Therefore, $wt(\phi(v))=k+2s$
 and $wt((u|v))=l+k+2s.$ Hence $m=l+k+2s.$
\end{proof}

\section{Equivalence and Automorphism Groups of an Additive Codes}

 \begin{definition}
 Let $\mathcal{C}_{1}$ and $\mathcal{C}_{2}$ be two additive codes in $\mathbb{Z}_{2}^{\alpha}\times \mathbb{Z}_{4}^{\beta}.$
  We say that $\mathcal{C}_{1}$ and $\mathcal{C}_{2}$ are permutation equivalent if there exists $(\sigma, \tau)\in S_{\alpha}\times S_{\beta}$
  such that
  \begin{equation*}
  \mathcal{C}_{2}=\left\{(x_{\sigma(1)},x_{\sigma(2)},\cdots,x_{\sigma(\alpha)},y_{\tau(1)},y_{\tau(2)},\cdots,y_{\tau(\beta)}) \mid 
  (x_{1},x_{2},\cdots,x_{\alpha},y_{1},y_{2},\cdots,y_{\beta})\in \mathcal{C}_{1}\right\}
  \end{equation*}
  where $S_{\alpha}$, $S_{\beta}$ are the Symmetric groups on $\alpha$ and $\beta$ symbols, respectively.
  \end{definition}
  In other words, for every $\sigma\in S_{\alpha} \text{ and } \tau\in S_{\beta},$ the map
 $f_{\sigma,\tau}:\mathbb{Z}_{2}^{\alpha}\times \mathbb{Z}_{4}^{\beta}\rightarrow \mathbb{Z}_{2}^{\alpha}\times \mathbb{Z}_{4}^{\beta}$ 
 defined by
 $f_{\sigma,\tau}(x_{1},x_{2},\cdots,x_{\alpha},y_{1},y_{2},\cdots,y_{\beta}) =  
 (x_{\sigma(1)},x_{\sigma(2)},\cdots,x_{\sigma(\alpha)},y_{\tau(1)},y_{\tau(2)},\cdots,y_{\tau(\beta)})$
 for all $(x|y) \in \mathbb{Z}_{2}^{\alpha}\times \mathbb{Z}_{4}^{\beta}$  induces an isomorphism 
 from $\mathcal{C}_{1}$ onto $\mathcal{C}_{2}.$
 
 Equivalently, two additive codes $\mathcal{C}_{1}$ and $\mathcal{C}_{2}$ in  $\mathbb{Z}_{2}^{\alpha}\times \mathbb{Z}_{4}^{\beta}$ are permutation
 equivalent if there are permutation matrices $P_{\alpha}$ and $P_{\beta}$ such that $f_{\sigma,\tau}(x,y)=(xP_{\alpha},yP_{\beta})$ gives a bijection 
 of $\mathcal{C}_{1}$ onto $\mathcal{C}_{2}$ where $P_{\alpha}$ is a permutation matrix of order $\alpha$ and $P_{\beta}$ is a permutation matrix
  of order $\beta.$ It is denoted as $\mathcal{C}_{1}\sim \mathcal{C}_{2}.$ In fact, this relation is an equivalence relation.
  
 \begin{example}
  Let $\mathcal{C}_{1}=\langle (10|31)\rangle=\{(00|00),(10|13),(00|22),(10|31)\}$ and 
  
  $\mathcal{C}_{2}=\langle (01|31)\rangle=\{(00|00),(01|31),(00|22),(01|13)\}.$ 
  Then $\mathcal{C}_{1}\sim \mathcal{C}_{2}.$
 \end{example}
\begin{example}
 Let $\mathcal{C}_{1}=\langle (101|121)\rangle=\{(000|000),(101|121),(000|202),(101|323)\}$ and 
 
 $\mathcal{C}_{2}=\langle (110|112)\rangle
 =\{(000|000),(110|112),(000|220),(110|332)\}.$ Then $\mathcal{C}_{1}\sim \mathcal{C}_{2}.$
\end{example}

%
%
  Let $\mathcal{C}$ be an additive code in $\mathbb{Z}_{2}^{\alpha}\times \mathbb{Z}_{4}^{\beta}.$ 
  Then the permutation automorphism group of $\mathcal{C}$ is defined to be the set of 
  all $(\sigma,\tau) \in S_{\alpha}\times S_{\beta}$
  such that $(x_{\sigma(1)},x_{\sigma(2)},\cdots,x_{\sigma(\alpha)},y_{\tau(1)},y_{\tau(2)},\cdots,y_{\tau(\beta)})\in \mathcal{C}
 \text{ for all } (x_{1},\cdots,x_{\alpha},y_{1},\cdots,y_{\beta})\in \mathcal{C}$ and is denoted by  $PAut(\mathcal{C}).$
 
 Clearly, $PAut(\mathcal{C})$ is a subgroup of $S_{\alpha}\times S_{\beta}.$
 
 \begin{example}
   Let 
  $\mathcal{C}$ be the code generated by $(1010|1213),$ then the code \\$\mathcal{C}
  =\{(0000|0000),(1010|1213),(0000|2022),(1010|3231)\}$
  is an additive code in $\mathbb{Z}_{2}^{4}\times \mathbb{Z}_{4}^{4}$ and  
  $PAut(\mathcal{C})=\{(e_{4},e_{4}),((13),e_{4}),((24),e_{4}), (e_{4},(13))\}\subseteq 
  S_{4}\times S_{4}$ where $e_4$ is the identity element of $S_4.$
 \end{example}

 \begin{example}
  Let   $\mathcal{C}=\langle(10|11),(11|31)\rangle,$ then the code is $\{(00|00),(10|11),(10|33),\\(01|02),(01|20),(00|22),(11|13),(11|31)\}$ 
  and $PAut(\mathcal{C})=\{(e_{2},e_{2}),(e_{2},(12))\}\subseteq S_{2}\times S_{2}$ where $e_2$ is the identity element of $S_2.$
 \end{example}
\begin{proposition}
 Let $\mathcal{C}=\langle(u|v)\rangle $ be an additive code in $\mathbb{Z}_{2}^{\alpha}\times \mathbb{Z}_{4}^{\beta}.$ If $u$ has $l$ unit positions
 and $v$ has $k$ zero divisior positions, $s$ zero positions, $q$ $1's$ positions and $r$ $3's$ positions, 
 then $|PAut(\mathcal{C})|=(l!+(\alpha-l)!-1) (k!+s!+q!+r!-3).$
\end{proposition}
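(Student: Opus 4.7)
The plan is to list the elements of $\mathcal{C}$, pin down the action of $(\sigma,\tau)\in S_{\alpha}\times S_{\beta}$ on these codewords by weight considerations, and then count the admissible permutations by an inclusion--exclusion over the level sets of $u$ and $v$. In the principal case that $v$ has at least one unit coordinate, the generator $(u|v)$ has order $4$, so
$$\mathcal{C}=\{(0|0),\,(u|v),\,(0|2v),\,(u|3v)\},$$
and a pair $(\sigma,\tau)$ lies in $PAut(\mathcal{C})$ exactly when the induced coordinate permutation sends this four-element set to itself.

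Next I would extract the constraints on $\sigma$ and on $\tau$ from weight preservation. Because $(0|2v)$ is the unique nonzero codeword whose $\mathbb{Z}_{2}$-part equals $0$, it must be fixed, forcing $\tau$ to preserve the support of $2v$, namely the positions of units in $v$. Simultaneously $\{(u|v),(u|3v)\}$ is permuted within itself, which gives $\sigma(u)=u$ and $\tau(v)\in\{v,3v\}$. In particular $\sigma$ must stabilise the $l$-element set of $1$-positions of $u$ together with its $(\alpha-l)$-element complement, and $\tau$ must stabilise each of the four level sets of $v$ of sizes $k$, $s$, $q$, $r$.

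The counting step is then an inclusion--exclusion on each side. For the $\mathbb{Z}_{2}$-part I would split the admissible $\sigma$ into those acting nontrivially only on the $1$-positions ($l!$ in number) and those acting nontrivially only on the $0$-positions ($(\alpha-l)!$ in number), with the identity counted in both; this gives the factor $l!+(\alpha-l)!-1$. An analogous decomposition over the four level sets of $v$, with the identity counted four times, yields $k!+s!+q!+r!-3$. Since the constraints on $\sigma$ and $\tau$ decouple, multiplying the two factors gives the stated cardinality.

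The hard part will be this final counting step: one must verify that the admissible permutations are exactly those described by this inclusion--exclusion rather than a larger product-type stabiliser, and in particular that the alternative identification $\tau(v)=3v$ contributes no new elements (for instance when $q\ne r$) or is already absorbed into the count. Separating which level-set-preserving permutations genuinely extend to elements of $PAut(\mathcal{C})$ from those that do not is where the bookkeeping becomes delicate.
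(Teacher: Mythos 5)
Your setup is exactly right, and it is essentially the paper's: in the principal case $\mathcal{C}=\{(0|0),(u|v),(0|2v),(u|3v)\}$, and membership of $(\sigma,\tau)$ in $PAut(\mathcal{C})$ reduces to $\sigma(u)=u$ and $\tau(v)\in\{v,3v\}$ (with $\tau(v)=3v$ only possible when it is compatible with $(0|2v)\mapsto(0|2v)$, which is automatic since $2(3v)=2v$ up to reordering). But the step you flag as ``the hard part'' is not mere bookkeeping: it is a genuine gap, and it cannot be closed, because the stated formula is false. The permutations $\sigma$ with $\sigma(u)=u$ form the full Young subgroup $S_{l}\times S_{\alpha-l}$ of order $l!\,(\alpha-l)!$, not the union of the two subgroups acting nontrivially on only one level set; a $\sigma$ that simultaneously permutes the $1$-positions among themselves and the $0$-positions among themselves fixes $u$ and, paired with $\tau=e$, lies in $PAut(\mathcal{C})$, yet is not among the $l!+(\alpha-l)!-1$ permutations counted. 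Concretely, for $\mathcal{C}=\langle(1100|1)\rangle\subseteq\mathbb{Z}_{2}^{4}\times\mathbb{Z}_{4}^{1}$ the formula gives $(2!+2!-1)(0!+0!+1!+0!-3)=3$, but $((12)(34),e)\in PAut(\mathcal{C})$ together with $((12),e)$, $((34),e)$, $(e,e)$ already gives order at least $4$. (Indeed $l!+(\alpha-l)!-1$ is the size of a union of two subgroups, which is generically not even the order of a group.) The same objection applies to the factor $k!+s!+q!+r!-3$, and additionally, as you suspect, the coset $\tau(v)=3v$ can contribute genuinely new elements that the formula ignores: for $v=(1,3)$ the transposition $(12)$ sends $v$ to $3v$ and so doubles the count on the $\mathbb{Z}_{4}$ side.

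For what it is worth, the paper's own proof commits precisely the error you anticipated, counting only permutations supported on a single level set; its formula survives in Example 5.6 only because there all but one level set on each side has size at most $1$, so that $l!\,(\alpha-l)!=l!+(\alpha-l)!-1$ coincidentally. In the paper's Example 5.4 the formula gives $6$, the paper lists a $4$-element set that is not closed under composition, and the true $PAut(\mathcal{C})$ has order $8$. The correct statement of the count, under your (correct) reduction and assuming no extra $v\mapsto 3v$ symmetry, would be $l!\,(\alpha-l)!\cdot k!\,s!\,q!\,r!$, possibly doubled on the second factor when a permutation realising $v\mapsto 3v$ exists. So your instinct to isolate and distrust that final step was exactly the right one; the proof cannot be completed as stated.
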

\begin{proof}
 Let  $u=(u_1,u_2,\cdots,u_{\alpha})\in \mathbb{Z}_{2}^{\alpha}$ and let 
 $v=(v_1,v_2,\cdots,v_{\beta})\in\mathbb{Z}_{4}^{\beta}.$
 
 For fixing $v,$ if we permute $l$ unit positions of $u,$ then we get $l!$ permutations $\sigma \in S_{\alpha}$ such that 
 $f_{\sigma,e_{\beta}}(u,v)\in \mathcal{C}$ and if we permute $\alpha-l$ zero positions of $u,$ we get $(\alpha-l)!$ permutations 
 $\sigma\in S_{\alpha}$ such that $f_{\sigma,e_{\beta}}(u,v)\in \mathcal{C}.$ Since $(e_{\alpha},e_{\beta})$ is in both 
 collection, for fixing $v,$ there are $l!+(\alpha-l)!-1$ permutations $\sigma \in S_{\alpha}$ such that 
 $f_{\sigma,e_{\beta}}(u,v)\in \mathcal{C}.$
 
 Similarly, for fixing $u,$ if we permute $k$ zero divisor positions, $s$ zero positions, $q$ $1$ positions and $r$ $3$ positions, then we get
 $k!,s!,q!$ and $r!$ permutations $\tau \in S_{\beta},$ respectively such that 
 $f_{e_{\alpha},\tau}(u,v)\in \mathcal{C}.$
 Since each collection has $(e_{\alpha},e_{\beta}),$  there are $k!+s!+q!+r!-3$ permutations 
 $\tau \in S_{\beta}$   such that $f_{e_{\alpha},\tau}(u,v)\in \mathcal{C}.$ 
 
 Therefore, for each $\sigma \in S_\alpha$ while fixing v, there are $k!+s!+q!+r!-3$ permutations $\tau \in S_{\beta}$  such that 
  $f_{\sigma,\tau}(u,v)\in \mathcal{C}$ and hence there are $(l!+(\alpha-l)!-1)(k!+s!+q!+r!-3)$ pair of permutations 
 $(\sigma,\tau)\in S_{\alpha}\times S_{\beta}$ such that  $f_{\sigma,\tau}(u,v)\in \mathcal{C}.$ 
 
 That is, $PAut(\mathcal{C})$ has $(l!+(\alpha-l)!-1)(k!+s!+q!+r!-3)$ elements.

\end{proof}

\begin{example}
 Let $\mathcal{C}=\langle (1101|1231)\rangle=\{(0000|0000),(1101|1231),(0000|2022),\\(1101,3213)\}.$
 Here $l=3,k=1$ and $s=0.$ By the above theorem, $|PAut(\mathcal{C})|=(3!+(4-3)!-1)(1!+0!+2!+1!-3)=(6)(2)=12.$
 
The group $PAut(\mathcal{C})=\{(e_{4},e_{4}),((12),e_{4}),((24),e_{4}),((14),e_{4}),((124),e_{4}),((142),e_{4}),\\
 (e_{4},(14)),((12),(14)),((24),(14)),((14),(14)),((124),(14)),((142),(14))\}\subseteq S_{4}\times S_{4}.$
\end{example}

An additive code $\mathcal{C}\subseteq \mathbb{Z}_{2}^{\alpha}\times\mathbb{Z}_{4}^{\beta}$ is called
a $\it{\mathbb{Z}_{2}\mathbb{Z}_{4}\text{-}additive\, cyclic\, code}$ if for every

 $(a_{0},a_{1},\cdots,a_{\alpha-1},b_{0},b_{1},\cdots,b_{\beta-1})\in \mathcal{C}$ $\Rightarrow \,(a_{\alpha-1},a_{0},
 \cdots,a_{\alpha-2},b_{\beta-1},b_{0},\cdots,b_{\beta-2})\in \mathcal{C}.$

\begin{theorem}
 Let $\mathcal{C}$ be an additive code in $\mathbb{Z}_{2}^{\alpha}\times \mathbb{Z}_{4}^{\beta}.$ Then $\mathcal{C}$ is cyclic iff 
 $PAut(\mathcal{C})= S_{\alpha}\times S_{\beta}.$
\end{theorem}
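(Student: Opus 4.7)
The assertion is a biconditional and the two directions are not symmetric in difficulty: one is a direct unpacking of definitions, the other is where essentially all the work has to happen.

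For the $(\Leftarrow)$ direction I would argue as follows. Assume $PAut(\mathcal{C}) = S_\alpha \times S_\beta$, and let $\sigma_0 = (1\,2\,\cdots\,\alpha) \in S_\alpha$ and $\tau_0 = (1\,2\,\cdots\,\beta) \in S_\beta$ be the one-step cyclic shifts. Then $(\sigma_0, \tau_0)$ lies in $S_\alpha \times S_\beta = PAut(\mathcal{C})$, so by the definition of $PAut$ the map $f_{\sigma_0,\tau_0}$ sends each codeword of $\mathcal{C}$ back into $\mathcal{C}$. Writing out $f_{\sigma_0,\tau_0}(a_0,\ldots,a_{\alpha-1} \mid b_0,\ldots,b_{\beta-1})$ yields precisely $(a_{\alpha-1},a_0,\ldots,a_{\alpha-2} \mid b_{\beta-1},b_0,\ldots,b_{\beta-2})$, which is exactly the cyclic shift appearing in the paper's definition of a $\mathbb{Z}_2\mathbb{Z}_4$-additive cyclic code. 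Hence $\mathcal{C}$ is cyclic.

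For the $(\Rightarrow)$ direction, suppose $\mathcal{C}$ is cyclic. By definition this only tells us $(\sigma_0, \tau_0) \in PAut(\mathcal{C})$, and since $PAut(\mathcal{C})$ is a group, the entire subgroup $\langle (\sigma_0,\tau_0)\rangle$ lies in $PAut(\mathcal{C})$. To conclude the full equality $PAut(\mathcal{C}) = S_\alpha \times S_\beta$ one would then try to express an arbitrary $(\sigma,\tau)$ as a word in $(\sigma_0,\tau_0)$ and invoke closure under composition.

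The main obstacle, and in my view a fatal one for the statement as written, is precisely this last step: the cyclic subgroup generated by $(\sigma_0,\tau_0)$ has order $\mathrm{lcm}(\alpha,\beta)$, which is almost always strictly smaller than $|S_\alpha \times S_\beta| = \alpha!\,\beta!$, so no single cyclic shift can generate the whole symmetric group. Classical counterexamples are readily available even in the pure binary case $\beta=0$: the length-$7$ binary cyclic Hamming code has permutation automorphism group $PSL(3,2)$ of order $168$, not $S_7$ of order $5040$. Thus the $(\Rightarrow)$ direction is false in general, and a correct formulation would replace $S_\alpha \times S_\beta$ on the right-hand side by the cyclic subgroup $\langle(\sigma_0,\tau_0)\rangle$, after which both directions become immediate from the argument above.
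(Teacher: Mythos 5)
Your analysis is correct, and it is worth stating plainly that the paper offers no proof of this theorem at all, so there is nothing to compare your argument against; your refutation stands on its own. The $(\Leftarrow)$ direction is exactly as you say: if $PAut(\mathcal{C})$ is all of $S_\alpha\times S_\beta$ then in particular it contains the pair of one-step cyclic shifts, and unwinding the paper's definition of $f_{\sigma,\tau}$ shows this is precisely the defining condition for a $\mathbb{Z}_2\mathbb{Z}_4$-additive cyclic code. The $(\Rightarrow)$ direction is indeed false, and your Hamming-code counterexample is valid since the paper permits $\beta=0$. An even smaller counterexample inside the paper's own framework: take $\alpha=4$, $\beta=0$ and $\mathcal{C}=\{0000,1010,0101,1111\}$, the binary cyclic code generated by $x^2+1$ in $\mathbb{Z}_2[x]/(x^4-1)$; it is additive and cyclic, but the transposition $(1\,2)$ sends $1010$ to $0110\notin\mathcal{C}$, so $PAut(\mathcal{C})\neq S_4$. (One can pad this with a trivial quaternary part to get an example with $\alpha,\beta>0$.) The paper's own Example~5.6, $\mathcal{C}=\langle(1111|3333)\rangle$, is consistent with the theorem only because every codeword is constant on each block, so it cannot serve as evidence for the general claim.

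One caution about your proposed repair: replacing the right-hand side by the \emph{equality} $PAut(\mathcal{C})=\langle(\sigma_0,\tau_0)\rangle$ is still wrong, since a cyclic code can have automorphisms beyond the shifts --- the paper's $\langle(1111|3333)\rangle$ is cyclic yet has $PAut(\mathcal{C})=S_4\times S_4$, which strictly contains the cyclic group of order $\mathrm{lcm}(4,4)=4$. The correct reformulation is the \emph{containment} $\langle(\sigma_0,\tau_0)\rangle\subseteq PAut(\mathcal{C})$, equivalently $(\sigma_0,\tau_0)\in PAut(\mathcal{C})$; with that right-hand side both directions are immediate by the argument in your first paragraph. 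Also note that, as you observe, the obstruction to $(\Rightarrow)$ is not merely that $\langle(\sigma_0,\tau_0)\rangle$ has order $\mathrm{lcm}(\alpha,\beta)<\alpha!\,\beta!$ --- a small generated subgroup could in principle still force a large $PAut$ --- but the explicit counterexamples settle the matter.
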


\begin{example}
 Let $\mathcal{C}=\langle (1111|3333)\rangle \subseteq \mathbb{Z}_{2}^{4}\times \mathbb{Z}_{4}^{4}$ be an additive code.
 Then $PAut(\mathcal{C})=S_{4}\times S_{4}.$
\end{example}
\section{Conclusion}
In this paper, we study a relative two-weight $\mathbb{Z}_{2}\mathbb{Z}_{4}$-additive codes. It is shown that the Gray image of a two-distance
$\mathbb{Z}_{2}\mathbb{Z}_{4}$-additive code is a binary two-distance code and that the Gray image of a relative two-weight
$\mathbb{Z}_{2}\mathbb{Z}_{4}$-additive code, with nontrivial binary part, is a linear binary relative two-weight code. 
The structure of relative two-weight $\mathbb{Z}_{2}\mathbb{Z}_{4}$-additive codes are described. 
Finally, we discussed permutation automorphism group of  $\mathbb{Z}_{2}\mathbb{Z}_{4}$-additive codes.





\begin{thebibliography}{199}
\bibitem{bon} Bonisoli A, {\it Every equidistant linear code is a sequence of dual Hamming codes,}
Ars Combin. 1983, {\bf 18}, 181-186.
\bibitem{car} Carlet C, {\it One weight $\mathbb{Z}_{4}$ linear codes,} In: Buchmann J, Hoholdt T, Stichtenoth H, Tapia-Recillas H (eds.)
 Coding Theory, Cryptography and Related Areas, Springer, Berlin, 2000, 57-72.
 \bibitem{wood} Wood J A, {\it The structure of linear codes of constant weight,} Trans. Am. Soc. 2000, {\bf 354}, 1007-1026.
 \bibitem{del} Delsarte P, {\it An algerbaic approach to the association schemes of coding theory,} 
 Philips Res. Rep. Suppl. 1973, {\bf 10}, vi+97.
 \bibitem{sole} Bonnecaze A, Sole P, Bachoc C and Mourrain B, {\it Type II codes over $\mathbb{Z}_{4},$} 
 IEEE Trans. Inf. Theory 1997, {\bf 43}, 969-976.
 \bibitem{bor} Borges J, Dougherty and S T, Fernandez-Cordoba C, {\it Characterization and constructions of self-dual codes over 
 $\mathbb{Z}_{2}\times \mathbb{Z}_{2}$}, Adv. Math. Commun. 2012, {\bf 6}, 287-303.
 \bibitem{bfc} Borges J, Fernandez-Cordoba C, Pujol J, Rifa J and Villanueva M, {\it $\mathbb{Z}_{2}\mathbb{Z}_{4}$ linear codes:
 generator matrices and duality}, Des. Codes Cryptogr. 2010, {\bf 54}, 167-179.
 \bibitem{br} Borges J and Rifa J, {\it A Characterization of 1-perfect additive codes}, IEEE Trans. Inf. Theory 1999, {\bf 45}, 1688-1697.
 \bibitem{df} Dougherty S T, Hongwei Liu and Long Yu, {\it One weight $\mathbb{Z}_{2}\mathbb{Z}_{4}$ additive codes}, 
 Applicable Algebra in Engineering, Communication and Computing, 2016, {\bf 27} (2), 123-138.
  \bibitem{df1} Dougherty S T and Fernandez-Cordoba C, {\it $\mathbb{Z}_{2}\mathbb{Z}_{4}$-additive formally self-dual codes}, Des. Codes Cryptogr.
 2011, {\bf 61}, 31-40.
 \bibitem{vera} Cary Huffman W and Vera Pless, {\it Fundamentals of Error-Correcting Codes}, © Cambridge University Press 2003.
 \bibitem{dhl} Dougherty S T  Hongwei Liu and  Long Yu, {\it One weight $\mathbb{Z}_{2}\mathbb{Z}_{4}$ additive 
 codes},  Springer-Verlag Berlin Heidelberg, 2015.
 \bibitem{liu} Liu Z and Chen W, {\it Notes on the value function}, Des. Codes Cryptogr. 2010, {\bf 54} (1), 11-19.
 \bibitem{lich} Liu Z, Chen W, Sun Z and Zeng X, {\it Further results on support weights of certain subcodes}, 
 Des. codes Cryptogr. 2011, {\bf 61} (2), 119-129.
 \bibitem{ghor} Sudhir R Ghorpade,  Krishna V and Kaipa, {\it Automorphism groups of Grassmann codes}, Finite Fields and Their Applications,
 September 2013, {\bf 23}, 80-102.
\end{thebibliography}
\end{document}